\documentclass[conference]{IEEEtran}
\IEEEoverridecommandlockouts
\newtheorem{theorem}{Theorem}[section]

\newtheorem{remark}{Remark}
\newtheorem{definition}{Definition}[section]

\newenvironment{proof}{\noindent {\bf Proof. }}{\hfill $\Box$ \\}

\usepackage{cite}
\usepackage{amsmath,amssymb,amsfonts}
\usepackage{algorithmic}
\usepackage{graphicx}
\usepackage{textcomp}
\usepackage{xcolor}

\begin{document}
\title{A Session Interaction Framework for The Multiple-Unicast Conjecture}


\author{Sirui Liu\textsuperscript{1,3}, Zongpeng Li\textsuperscript{3,1}, Xiying Fan\textsuperscript{2},
Haifeng Chen\textsuperscript{4}
\thanks{Corresponding author: Zongpeng Li, Email: zongpeng@tsinghua.edu.cn. This work has been supported by the Research Project of Quan Cheng Laboratory, China (Grant No. QCL20250108), the Shandong Provincial Natural Science Foundation, China (Grant No. ZR2025LZH008), the Research Project of Provincial Laboratory of Shandong, China (Grant No. SYS202201) and the Basic and Applied Basic Research Foundation of Guangdong Province (Grant No. 2023A1515110524). Xiying Fan is with University of Science and Technology Beijing (Beijing \& Shunde Innovation School, Guangdong).} \\
\thanks{%
\textcopyright~2026 IEEE.
Personal use of this material is permitted.
Permission from IEEE must be obtained for all other uses,
in any current or future media, including reprinting/republishing
this material for advertising or promotional purposes,
creating new collective works, for resale or redistribution
to servers or lists, or reuse of any copyrighted component
of this work in other works.%
}
\textsuperscript{1}\emph{Institute for Network Sciences and Cyberspace, Tsinghua University, Beijing, China} \\
\textsuperscript{2}\emph{University of Science and Technology Beijing, Beijing, China} \\
\textsuperscript{3}\emph{Quan Cheng Laboratory, Shandong, China}\\
\textsuperscript{4}\emph{China Unicom, Beijing, China}
}

\maketitle


\begin{abstract}
The multiple-unicast conjecture asserts that network coding offers no throughput advantage over routing in undirected networks. Its validity is known to imply fundamental lower bounds in computational complexity. We propose a \textit{Session Interaction Framework} that reduces the conjecture to a central equivalence: the conjecture holds universally if and only if every irreducible core is independent. This result transforms the global feasibility problem into a two-stage process. First, to make the reduction phase tractable, we provide simplified sufficient conditions for session dominance, offering geometric criteria to iteratively simplify complex session sets. Second, for the remaining ``irreducible core'', we propose a \textit{Session Decoupling Theorem}, reducing the conjecture's validity for a session set to its independent subsets. Topologically, we prove that sessions separated by high-cost cuts or cut-vertices are guaranteed to be independent. By integrating these reduction and decomposition mechanisms, our framework offers a systematic methodology to verify the conjecture across general network topologies.
\end{abstract}

\begin{IEEEkeywords}
Network coding, multiple unicast, Li and Li conjecture, capacity region, session dominance.
\end{IEEEkeywords}

\section{Introduction}

Network coding allows intermediate nodes to mix information streams~\cite{ahlswede2000network}. While its capacity benefits are well-established in single-session scenarios---such as multicast, the optimal solution for multiple unicast sessions remains undetermined. In this context, the routing-based solution corresponds to the standard multicommodity flow problem, yet the structure of the optimal network coding solution remains elusive.

The multiple-unicast conjecture (Li and Li~\cite{li2004network}, Harvey {\em et al.}~\cite{harvey2004comparing}) asserts that for undirected networks, network coding offers no throughput advantage over routing. Despite its seemingly simple nature, this conjecture has withstood over two decades of scrutiny, widely regarded as ``arguably the most important open problem in the field''~\cite{adler2006capacity}. Recently, its significance has transcended information theory: Farhadi {\em et al.}~\cite{10.1145/3313276.3316337} and Afshani {\em et al.}~\cite{afshani2019lower} revealed that its validity implies breakthrough lower bounds in computational complexity, making its resolution pivotal for delineating the limits of both information flow and computation.

Prior studies~\cite{6283627,6750094,seymour1980four,hu1963multi,s3-42.1.178,198499,liu2025isit,9729815,8101552,jain2006capacity,adler2006capacity,liu2026multiple} have predominantly focused on global network properties (\emph{e.g.} the cut-set bound) or information inequalities, often encountering combinatorial intractability as network complexity grows. To address this, our recent work~\cite{liu2026multiple} introduced \textit{Session Dominance}, a structural reduction concept. While dominance theoretically preserves the conjecture's validity through simplification, two critical challenges remain. First, finding an optimal reduction sequence based on the general definition is combinatorially non-trivial. Second, dominance is inherently a reduction mechanism---it leaves an ``irreducible core'' of sessions whose mutual entanglement remains uncharacterized. Thus, a complete framework requires not only conducting reduction but also resolving the interaction nature of the remaining core.

In this work, we propose the \emph{Session Interaction Framework} to achieve this. We introduce \textit{Session Independence} as the fundamental theoretical counterpart to dominance. By integrating these concepts, we address the global feasibility problem via a verifiable two-stage process:

\textbf{Stage I: Operational Reduction via Dominance.} We transcend the general definition in~\cite{liu2026multiple} by deriving tractable geometric criteria for dominance. We show that sessions satisfying conditions such as \emph{Cycle Closure} and \emph{Non-Crossing} can be deterministically reduced, bypassing the need for complex combinatorial search.

\textbf{Stage II: Decomposition via Independence.} For the remaining ``irreducible core,'' we examine whether sessions are information-theoretically entangled. We prove that the conjecture holds \emph{if and only if} every irreducible core is independent. Topologically, we show that features like high-cost cuts serve as certificates for this independence.

Our main contributions are summarized as follows:
\begin{enumerate}
    \item \textbf{Constructive Reduction Criteria:} We provide new, tractable sufficient conditions (Cycle Closure and Non-Crossing) to identify session dominance efficiently.
    \item \textbf{Interaction Characterization Theorem:} We introduce Session Independence as the theoretical counterpart to dominance, proving that the conjecture holds globally if and only if every irreducible core is independent.
    \item \textbf{Topological Decoupling:} We establish the Session Decoupling Theorem and identify topological independence certificates---specifically high-cost cuts and cut-vertices---to enable rigorous problem decomposition.
\end{enumerate}
\section{Notations and Preliminaries}

\subsection{Network Model}
We model the network as an undirected graph $G = (V, E)$ with edge capacities $\mathbf{c} \in \mathbb{R}_{\ge 0}^{|E|}$ and rate vector $\mathbf{r}$. We consider $k$ unicast sessions $\mathcal{H} = \{ (s_i, t_i) \}_{i=1}^k$, where $s_i, t_i \in V$ are the source and sink, and $r_i$ is the demand rate. To handle directed flows, each undirected edge $e=\{u,v\}$ is replaced by opposing arcs $\vec{uv}$ and $\vec{vu}$. A feasible coding flow $f$ must satisfy the aggregate capacity constraint $f(\vec{uv}) + f(\vec{vu}) \le c(e)$ for every edge $e \in E$.

\subsection{The Multiple-Unicast Conjecture}
The multiple-unicast conjecture asserts that for undirected networks, network coding offers no throughput advantage over routing. We analyze this via its dual formulations:

\textbf{Throughput Domain.} 
A demand rate vector $\mathbf{r}$ is achievable with network coding if and only if it is achievable with routing.

\textbf{Cost Domain.} 
By linear programming duality, the conjecture is equivalent to verifying that network coding cannot reduce the total transmission cost. We denote by $J_{\text{code}}(\mathcal{H})$ and $J_{\text{route}}(\mathcal{H})$ the minimum total cost $\sum_{e \in E} w(e)f(e)$ required to satisfy the demands in $\mathcal{H}$ using network coding and routing, respectively. For any edge weight vector $\mathbf{w} \in \mathbb{R}_{\ge 0}^{|E|}$, any valid coding flow $f$ satisfying demand $\mathbf{r}$ must adhere to:
\begin{equation} \label{eq:conjecture_cost}
    \sum_{e=(u,v) \in E} w(e) \left( f(\vec{u v}) + f(\vec{v u}) \right) \ge \sum_{i=1}^k r_i \cdot d_w(s_i, t_i),
\end{equation}
where $d_w(u, v)$ is the shortest-path distance under $\mathbf{w}$. We focus on verifying Eq.~\eqref{eq:conjecture_cost}.

\subsection{Session Dominance~\cite{liu2026multiple}}
To analyze Eq.~\eqref{eq:conjecture_cost}, we utilize the \textit{Session Dominance} framework, which reduces complex instances based on metric properties.

\begin{definition}[Dominance] \label{def:dominance}
    A session $h_1=(s_1, t_1)$ \emph{dominates} $h_2=(s_2, t_2)$ in $(G, \mathbf{w})$ if $d_w(s_1, s_2) + d_w(s_2, t_2) + d_w(t_2, t_1) = d_w(s_1, t_1)$.
    
    An instance $\mathcal{I}'=(G', \mathcal{H}', \mathbf{w}')$ dominates $\mathcal{I}=(G, \mathcal{H}, \mathbf{w})$ if:
    \begin{enumerate}
        \item $G \subseteq G'$ and $\mathbf{w}'|_G \le \mathbf{w}$;
        \item For every session $(s, t) \in \mathcal{H}$, the shortest-path distance $d_{w'}(s, t)$ in $G'$ equals $d_w(s, t)$ in $G$;
        \item The set $\mathcal{H}'$ dominates $\mathcal{H}$ in $G'$.
    \end{enumerate}
\end{definition}

\begin{theorem} \label{thm:reduction}
If an instance $\mathcal{I}'$ dominates $\mathcal{I}$, and the cost equality $J_{code} = J_{route}$ holds for $\mathcal{I}'$, then it also holds for $\mathcal{I}$.
\end{theorem}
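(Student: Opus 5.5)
We prove the statement in the cost domain. For $\mathcal{I}$ the claim is that every feasible coding flow $f$ in $G$ satisfying the demands $\mathbf{r}$ of $\mathcal{H}$ obeys Eq.~\eqref{eq:conjecture_cost} with weights $\mathbf{w}$. The inequality $J_{code}\le J_{route}$ always holds, since routing is a special case of coding, and $J_{route}(\mathcal{H})=\sum_i r_i d_w(s_i,t_i)$. So only the lower bound $J_{code}(\mathcal{I})\ge \sum_i r_i d_w(s_i,t_i)$ needs proof.

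\textbf{Step 1: embed the coding solution.} Take an optimal coding solution $f$ for $\mathcal{I}$. Since $G\subseteq G'$, $f$ is a coding solution in $G'$ after setting the flow to zero on $E(G')\setminus E(G)$. Because $\mathbf{w}'|_G\le\mathbf{w}$, its cost under $\mathbf{w}'$ is at most its cost under $\mathbf{w}$:
\[
\textstyle\sum_e w(e)(f(\vec{uv})+f(\vec{vu}))\ \ge\ \sum_e w'(e)(f(\vec{uv})+f(\vec{vu})).
\]

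\textbf{Step 2: pass from $\mathcal{H}$ to $\mathcal{H}'$.} This is the step I expect to be the main obstacle. Condition 3 says that each $(s_2,t_2)\in\mathcal{H}$ is dominated by some $(s_1,t_1)\in\mathcal{H}'$, meaning $s_2,t_2$ lie in order on a $\mathbf{w}'$-shortest $s_1$--$t_1$ path. The plan is to build, from $f$, a coding solution $f'$ for $\mathcal{H}'$ in $G'$ with rate $r_1\ge r_2$ on the dominating session:
\begin{enumerate}
\item Route fresh symbols of session $h_1$ along a shortest path from $s_1$ to $s_2$.
\item Let them play the role of the $h_2$ source, so that $f$ carries them from $s_2$ to $t_2$.
\item Route the decoded symbols from $t_2$ to $t_1$ along a shortest path.
\end{enumerate}
Sessions of $\mathcal{H}'$ that dominate nothing get rate $0$. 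If a session of $\mathcal{H}'$ dominates several sessions of $\mathcal{H}$, split its rate across them. The added cost is $r_2(d_{w'}(s_1,s_2)+d_{w'}(t_2,t_1))$. I would need to check that the multiple-unicast coding model permits this source relabeling and concatenation, that is, that the decodability and independence constraints survive when a source is fed by routed data.

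\textbf{Step 3: apply the hypothesis and cancel.} Applying $J_{code}=J_{route}$ on $\mathcal{I}'$ to $f'$ gives
\[
\mathrm{cost}_{w'}(f)+\textstyle\sum r_2\bigl(d_{w'}(s_1,s_2)+d_{w'}(t_2,t_1)\bigr)\ \ge\ \sum r_2\, d_{w'}(s_1,t_1).
\]
The dominance equality $d_{w'}(s_1,t_1)=d_{w'}(s_1,s_2)+d_{w'}(s_2,t_2)+d_{w'}(t_2,t_1)$ cancels the extension costs, leaving $\mathrm{cost}_{w'}(f)\ge\sum_i r_i\,d_{w'}(s_i,t_i)$. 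Condition 2 converts $d_{w'}(s_i,t_i)$ into $d_w(s_i,t_i)$. Combined with Step 1, this gives Eq.~\eqref{eq:conjecture_cost} for $\mathcal{I}$, hence $J_{code}=J_{route}$.

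The obstacle in Step 2 is the rigorous coding-theoretic justification of the concatenation and rate splitting. If that fails, I would fall back to an entropy-based formulation: define the extended flow through the functional dependence of the edge messages and verify the cut and decoding constraints directly.
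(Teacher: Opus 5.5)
The paper does not prove this statement: it is quoted in the preliminaries from \cite{liu2026multiple}, so there is no in-text argument to compare against. Your proof is correct, and it is the natural argument:
\begin{enumerate}
\item embed an arbitrary feasible coding flow $f$ into $G'$ (its cost can only drop since $\mathbf{w}'|_G\le\mathbf{w}$);
\item extend each dominated session to its dominator along $\mathbf{w}'$-shortest paths;
\item apply the hypothesis on $\mathcal{I}'$ and cancel the extension costs with the dominance equality;
\item use Condition~2 to return from $d_{w'}$ to $d_w$.
\end{enumerate}
The concatenation you flag as the main obstacle is routine, not a gap. Split each $M'_{h'}$ into independent uniform sub-messages of rates $r_h$, one per session $h$ assigned to $h'$, and forward each verbatim on its own sub-stream. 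Then run the original code, after a finite start-up delay, with these sub-messages substituted for $(M_h)_{h\in\mathcal{H}}$. The substitutes have exactly the joint law of the original independent messages, so the edge rates and decoding guarantees of $f$ are unchanged. The forwarded streams add exactly $r_h\bigl(d_{w'}(s_{h'},s_h)+d_{w'}(t_h,t_{h'})\bigr)$ to the cost, so no entropy-based fallback is needed.

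Your argument relies on two assumptions that the paper leaves implicit, and you should state them.
\begin{enumerate}
\item You read ``$\mathcal{H}'$ dominates $\mathcal{H}$'' as ``every session of $\mathcal{H}$ is dominated, in the sense of Definition~\ref{def:dominance}, by some session of $\mathcal{H}'$.'' The paper never defines set-level dominance beyond the pairwise notion.
\item You need the hypothesis on $\mathcal{I}'$ to hold for the specific aggregated demand vector you construct, including zero rates on unused sessions. So ``$J_{code}=J_{route}$ for $\mathcal{I}'$'' must be understood for all demand vectors, which is how the paper later uses the theorem.
\end{enumerate}
Also, taking an \emph{optimal} $f$ in Step~1 is unnecessary and raises the question of whether the minimum is attained. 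Your opening reduction to an arbitrary feasible $f$ already suffices.
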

 
We identify dominance using tractable pairwise criteria.

\begin{theorem}[Pairwise Dominance] \label{thm:pairwise}
    Two sessions $h_1, h_2$ can be dominated by a single session (or a symmetric pair) if they satisfy either:
    \begin{enumerate}
        \item $d_w(s_1, t_1) + d_w(s_2, t_2) \le d_w(s_1, t_2) + d_w(s_2, t_1)$;
        \item $d_w(s_1, t_1) + d_w(s_2, t_2) \le d_w(s_1, s_2) + d_w(t_1, t_2)$.
    \end{enumerate}
\end{theorem}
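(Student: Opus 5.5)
Case 1 and case 2 are symmetric: case 2 is case 1 with the roles of $s_2$ and $t_2$ exchanged, and the undirected setting makes sessions symmetric in source and sink. So we treat case 1 and obtain case 2 by relabeling. The idea is to construct, via Definition~\ref{def:dominance}, a dominating instance $\mathcal{I}'$ obtained from $G$ by adding auxiliary vertices and edges. The metric $d_w$ is kept on the original vertex set, and in $\mathcal{I}'$ a single session (or a symmetric pair) is a geodesic through all four terminals. Theorem~\ref{thm:reduction} then transfers $J_{code}=J_{route}$ from $\mathcal{I}'$ to $\mathcal{I}$. Moreover, for a single session the coding/routing equality is immediate, since one unicast is a single-commodity flow and max-flow equals min-cut.

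\textbf{Construction.} Assume condition 1, $D_{11}+D_{22}\le D_{12}+D_{21}$, where $D_{ij}=d_w(s_i,t_j)$. We add a new source $s^*$ joined to $s_1$ and $s_2$, and a new sink $t^*$ joined to $t_1$ and $t_2$, and choose nonnegative weights $a_1,a_2,b_1,b_2$ on these four edges. We require:
\begin{itemize}
\item[(i)] $d_{w'}(s^*,t^*)=a_i+D_{ii}+b_i$ for both $i=1,2$, i.e.\ $a_1+D_{11}+b_1=a_2+D_{22}+b_2=:L$;
\item[(ii)] no new shortcut: $a_1+a_2\ge d_w(s_1,s_2)$ and $b_1+b_2\ge d_w(t_1,t_2)$, so that all distances among the original vertices are unchanged (condition 2 of Definition~\ref{def:dominance});
\item[(iii)] the cross routes are not shorter: $a_1+D_{12}+b_2\ge L$ and $a_2+D_{21}+b_1\ge L$.
\end{itemize}
Then $s^*\to s_i\to t_i\to t^*$ is a geodesic, so $(s^*,t^*)$ dominates each $h_i$. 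The definition is satisfied with $\mathcal{H}'=\{(s^*,t^*)\}$, after scaling the single session's rate to $r_1+r_2$, or with two parallel copies, which is the ``symmetric pair''.

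\textbf{Feasibility (main obstacle).} The main work is showing that the linear system (i)–(iii) is feasible exactly under condition 1. Adding the two inequalities in (iii) and substituting (i) gives $D_{12}+D_{21}\ge D_{11}+D_{22}$, which is precisely condition 1. So condition 1 is necessary, and we must show it is also sufficient. My plan is to take $a_i,b_i$ large enough for (ii) by adding a common offset: $a_1$ and $a_2$ can each be increased by the same amount without affecting (i) or (iii), and likewise for $b_1,b_2$. It then remains to tune the differences $\delta=a_1-a_2$ and $\epsilon=b_1-b_2$. Condition (i) fixes $\delta+\epsilon=D_{22}-D_{11}$, and (iii) becomes two interval constraints on $\delta$ whose compatibility is again condition 1. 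The offsets also keep all weights nonnegative.

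\textbf{Remaining check and case 2.} I would finally check that triangle inequalities involving $s^*$ and $t^*$ cannot shorten $d(s_i,t_i)$, for example along a path $s_1\to s^*\to s_2\to\cdots$. This is guaranteed by (ii) together with the large offsets, since any path through an auxiliary vertex costs at least $a_1+a_2\ge d_w(s_1,s_2)$. Case 2 follows by the same construction with $s_2$ and $t_2$ swapped: attach $s^*$ to $s_1,t_2$ and $t^*$ to $t_1,s_2$.
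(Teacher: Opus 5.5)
Your construction matches the one the paper uses. Theorem~\ref{thm:pairwise} is imported from prior work, and its construction appears inside the proof of Theorem~\ref{thm:cost_dominance_condition}. There, auxiliary terminals $S,T$ are attached to the four endpoints, the weights solve a linear system whose solvability is exactly the distance inequality, and the weights are made large (the paper's ``sufficiently large $L$'', your common offsets) so that no shortcuts appear. Your feasibility analysis via $\delta,\epsilon$ is correct and more explicit than the paper's ``it can be shown''.

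One point needs correcting. In the paper a \emph{symmetric pair} means $\{(S,T),(T,S)\}$, not two parallel copies of $(S,T)$, and it arises precisely in case 2. Dominance in Definition~\ref{def:dominance} is directional: the source of the dominating session must sit near the source of the dominated one. So after you attach $s^*$ to $s_1,t_2$ and $t^*$ to $t_1,s_2$, the geodesic $s^*\to t_2\to s_2\to t^*$ shows that $(s^*,t^*)$ dominates the reversed session $(t_2,s_2)$, not $h_2$. Your appeal to undirectedness does not close this. Replacing $h_2$ by its reverse in the instance is not shown to preserve $J_{code}$, and the paper deliberately treats symmetric pairs as non-reducible.

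The fix is immediate. Because $d_{w'}$ is symmetric, the same geodesic read backwards gives $d_{w'}(t^*,s_2)+d_{w'}(s_2,t_2)+d_{w'}(t_2,s^*)=d_{w'}(t^*,s^*)$, so $(t^*,s^*)$ dominates $h_2$. Hence $\mathcal{H}'=\{(s^*,t^*),(t^*,s^*)\}$ dominates $\{h_1,h_2\}$. With this reading, case 1 yields a single session and case 2 yields the symmetric pair, exactly as in the paper.
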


\section{The Session Interaction Framework}

The Session Interaction Framework adopts a hierarchical reduction strategy. We propose that the complexity of the multiple-unicast conjecture can be resolved through a sequential verification process. First, Reduction via Dominance utilizes distance-based properties defined in Section II to identify and merge redundant sessions, thereby strictly reducing the size of the session set. Second, Decomposition via Independence verifies if the remaining sessions are non-interacting; if so, the global problem is decomposed into independent sub-problems.

In this section, we first provide the formal definition of Session Independence in the Cost Domain. With the concepts of reducibility and independence established, we present the \textbf{Interaction Characterization Theorem}, which serves as the theoretical foundation for this framework.

\subsection{Session Independence}

Complementing dominance, \emph{Session Independence} characterizes the absence of coupling between sessions. We define this property in the \textbf{Cost Domain} to facilitate our reduction framework.

Let $\mathcal{H}$ be the set of sessions and $\mathcal{M} = \{M_1, \dots, M_k\}$ be the set of independent source messages. We model the transmission on any edge $e$ as a sequence of random variables (symbols).

\begin{definition}[Session Independence] 
\label{def:independence}
    Given a network instance $(G, \mathcal{H}, \mathbf{w})$, the session set $\mathcal{H}$ is \emph{independent}\footnote{In the conference version, the existential quantifier over the minimum-cost coding solution in this definition was inadvertently scoped separately for each session pair. The intended statement, used by the subsequent proofs and in the conference presentation, is the common-solution formulation stated here.} under $\mathbf{w}$ if there exists a minimum-cost coding solution $\mathcal{C}^{\star}$ such that, for every pair of distinct sessions $h_i,h_j \in \mathcal{H}$, every edge $e \in E$, and every symbol $X$ transmitted on $e$ by $\mathcal{C}^{\star}$,
    \begin{equation}
    \label{eq:independence}
        I(X; M_i \mid \mathbf{M}_{\setminus \{i\}}) \cdot I(X; M_j \mid \mathbf{M}_{\setminus \{j\}}) = 0,
    \end{equation}
    where $\mathbf{M}_{\setminus \{k\}} = \mathcal{M} \setminus \{M_k\}$.
    This implies that a single symbol $X$ cannot simultaneously carry information for both session $i$ and session $j$.

    More generally, two disjoint subsets $\mathcal{H}_1,\mathcal{H}_2 \subseteq \mathcal{H}$ are \emph{independent} under $\mathbf{w}$ if there exists a minimum-cost coding solution for $\mathcal{H}$ such that Eq.~\eqref{eq:independence} holds for every $h_i \in \mathcal{H}_1$, every $h_j \in \mathcal{H}_2$, every edge $e \in E$, and every symbol $X$ transmitted on $e$ by that solution.
\end{definition}

\begin{theorem}[Equivalence to Session Independence]
\label{thm:equivalence_independence}
For a given network instance $(G, \mathcal{H}, \mathbf{w})$, the cost equality $J_{code}(\mathcal{H}) = J_{route}(\mathcal{H})$ holds if and only if $\mathcal{H}$ is independent under $\mathbf{w}$.
\end{theorem}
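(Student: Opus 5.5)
We prove the two directions separately, working throughout in the Cost Domain of Eq.~\eqref{eq:conjecture_cost}, where $J_{route}(\mathcal{H})=\sum_i r_i d_w(s_i,t_i)$ is attained by sending each session along a shortest $s_i$--$t_i$ path.

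\emph{($\Rightarrow$)} Assume $J_{code}(\mathcal{H})=J_{route}(\mathcal{H})$. Take $\mathcal{C}^\star$ to be the routing solution that sends each $M_i$ uncoded along a shortest path. Since the two optima coincide, this routing solution is a minimum-cost coding solution. In it, every symbol $X$ on every edge is a function of exactly one message $M_\ell$. Since the messages are independent, $I(X;M_j\mid \mathbf{M}_{\setminus\{j\}})=0$ for every $j\neq \ell$. Hence the product in Eq.~\eqref{eq:independence} vanishes for every pair $i\neq j$, and $\mathcal{H}$ is independent.

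\emph{($\Leftarrow$)} Assume some minimum-cost coding solution $\mathcal{C}^\star$ satisfies Eq.~\eqref{eq:independence}. We want to show that its cost is at least $\sum_i r_i d_w(s_i,t_i)$. For each edge symbol $X$, let $S(X)=\{i : I(X;M_i\mid\mathbf{M}_{\setminus\{i\}})>0\}$. The hypothesis says $|S(X)|\le 1$, so each symbol is ``owned'' by at most one session. Assign each symbol's capacity usage to its owner; symbols with $S(X)=\emptyset$ are assigned to no one. This partitions the edge usage of $\mathcal{C}^\star$ into subnetworks $\mathcal{C}_i$.

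The key step is to show that $\mathcal{C}_i$ alone supports rate $r_i$ from $s_i$ to $t_i$. Once that holds, the max-flow/min-cut theorem for a single unicast gives $\mathrm{cost}(\mathcal{C}_i)\ge r_i d_w(s_i,t_i)$. Summing over $i$ yields $J_{code}\ge J_{route}$, and the reverse inequality is trivial. To establish the key step, fix a cut $(A,\bar A)$ separating $s_i$ from $t_i$ and let $\mathbf{X}_{\partial}$ be all symbols crossing from $A$ into $\bar A$. Decodability at $t_i$, together with causality and an acyclic or time-expanded model of the network, gives
\[
r_i \le H(M_i)=I(\mathbf{X}_{\partial};M_i\mid \mathbf{M}_{\setminus\{i\}}),
\]
where the conditioning is permitted because the other messages can be supplied as side information (all sources of the other sessions may be treated as located on both sides). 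Using the chain rule and the fact that symbols whose conditional mutual information with $M_i$ is zero contribute nothing, we get
\[
I(\mathbf{X}_{\partial};M_i\mid \mathbf{M}_{\setminus\{i\}})\le \sum_{X\in\mathbf{X}_\partial,\ i\in S(X)} H(X),
\]
which is the $\mathcal{C}_i$-capacity of the cut. A weighted cut and LP duality argument, applied per session, then turns these cut bounds into the cost bound.

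\emph{Main obstacle.} The hard step is the chain-rule inequality above. A symbol $X$ can have $I(X;M_i\mid \mathbf{M}_{\setminus\{i\}})=0$ individually and still contribute jointly with other symbols; for example, $X_1=M_i\oplus Z$ and $X_2=Z$ with $Z$ a function of other data. This phenomenon is synergy. My first attempt would be to refine the symbol decomposition: by the definition of a coding solution, each edge symbol is a function of the messages. In that refined decomposition, any symbol carrying joint information about $M_i$ must itself satisfy $I(\cdot\,;M_i\mid\mathbf{M}_{\setminus\{i\}})>0$, using that conditioning on all other messages makes each symbol a deterministic function of $M_i$ alone. I would then argue that conditional mutual information given $\mathbf{M}_{\setminus\{i\}}$ is subadditive over such deterministic functions, via $I(\mathbf{X};M_i\mid\cdot)=H(\mathbf{X}\mid\mathbf{M}_{\setminus\{i\}})\le\sum H(X\mid \mathbf{M}_{\setminus\{i\}})$. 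Each term on the right is zero unless $i\in S(X)$, which yields exactly the needed bound.
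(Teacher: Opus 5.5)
Your proof is correct. Its skeleton is the paper's: necessity via an optimal routing solution, and sufficiency by splitting the witnessing solution's usage by session ownership, then lower-bounding each share by the single-unicast cost.

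The two differ at the crucial step. The paper simply asserts that per-symbol orthogonality yields $f_e=\sum_k f_e^{(k)}$ with each $f^{(k)}$ a feasible single-session solution, and then invokes $J_{code}(h_k)=J_{route}(h_k)$. You instead prove what is needed without building standalone codes. You combine three ingredients:
\begin{itemize}
\item the time-expanded cut-set bound conditioned on $\mathbf{M}_{\setminus\{i\}}$;
\item the identity $I(X;M_i\mid\mathbf{M}_{\setminus\{i\}})=H(X\mid\mathbf{M}_{\setminus\{i\}})$, valid for deterministic codes;
\item subadditivity of conditional entropy.
\end{itemize}
Together these show that every $s_i$--$t_i$ cut carries at least $r_i$ units of the $i$-owned usage $c^{(i)}$. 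Max-flow/min-cut, or integrating over the level cuts $\{v: d_w(s_i,v)<\theta\}$, then gives $\sum_a w(a)c^{(i)}(a)\ge r_i d_w(s_i,t_i)$.

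Your route buys rigor. It exposes the modeling assumptions on which the paper's argument silently rests (deterministic codes, an acyclic or time-expanded model). It also shows that \emph{any} deterministic solution satisfying Eq.~\eqref{eq:independence}, not only a minimum-cost one, costs at least $J_{route}$. The paper's version is shorter but leaves the feasibility of each component unargued.

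One correction to your obstacle paragraph. If $Z$ is a function of the other messages, then $I(X_1;M_i\mid\mathbf{M}_{\setminus\{i\}})=H(M_i)>0$, so that is not a synergy case. Genuine synergy, e.g.\ $X_1=M_i\oplus R$ and $X_2=R$ with both conditional informations zero, requires private randomness $R$ independent of the messages. That is exactly what your deterministic-code assumption excludes, so the assumption is essential rather than cosmetic.
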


\begin{proof}
    We prove the equivalence in two directions.

    \noindent\textbf{1. Necessity ($\Rightarrow$):}
    Assume the cost equality holds ($J_{\text{code}} = J_{\text{route}}$).
    Then, an optimal routing solution serves as a minimum-cost coding solution in which information streams remain disjoint.
    In this solution, any transmitted symbol $X$ depends on at most one message $M_k$.
    Consequently, for $i \neq k$, $I(X; M_i | M_{\setminus \{i\}}) = 0$.
    This ensures the product in Eq.~\eqref{eq:independence} vanishes for all pairs, satisfying Definition \ref{def:independence}.

    \noindent\textbf{2. Sufficiency ($\Leftarrow$):} 
    Assume $\mathcal{H}$ is independent. By Definition \ref{def:independence}, let $f$ be the flow vector of a minimum-cost coding solution for $\mathcal{H}$ that witnesses independence.
    The independence condition implies that the information flows for different sessions are information-theoretically orthogonal on every edge. Consequently, the required flow rate $f_e$ is strictly additive, decomposing into a sum of single-session flows: $f_e = \sum_{k} f_e^{(k)}$, where $f_e^{(k)}$ depends only on message $M_k$.
    
    Since each component $f^{(k)}$ is a feasible solution for the single unicast session $h_k$, its cost contribution must be at least the optimal coding cost for that session ($J_{\text{code}}(h_k)$). Thus, the total cost satisfies:
    \begin{equation}
        J_{\text{code}}(\mathcal{H}) = \sum_{e} w(e)\sum_{k} f_e^{(k)} \ge \sum_{k} J_{\text{code}}(h_k).
    \end{equation}
    Since network coding provides no cost advantage for single unicast ($J_{\text{code}}(h_k) = J_{\text{route}}(h_k)$) and routing costs are additive:
    \begin{equation}
        J_{\text{code}}(\mathcal{H}) \ge \sum_{k} J_{\text{route}}(h_k) = J_{\text{route}}(\mathcal{H}).
    \end{equation}
    Given that $J_{\text{code}}(\mathcal{H}) \le J_{\text{route}}(\mathcal{H})$ always holds, we conclude: $J_{\text{code}}(\mathcal{H}) = J_{\text{route}}(\mathcal{H}).$
\end{proof}

\subsection{Interaction Characterization}

We now combine dominance and independence to analyze the conjecture. First, we define {\em reducibility}.

\begin{definition}[Reducible Pair]
\label{def:reducible}
    A pair of sessions $h_i, h_j$ is \emph{reducible} if it satisfies the Pairwise Dominance condition (Theorem \ref{thm:pairwise}) and is not symmetric ({\em i.e.}, the sessions are not merely reversed).
    An \emph{Irreducible Core} admits no further reducible pairs. By Theorem \ref{thm:reduction}, verifying the cost equality on any resulting core suffices for the original instance, regardless of the reduction sequence.
\end{definition}

\begin{theorem}[Interaction Characterization]
\label{thm:interaction}
The multiple-unicast conjecture holds for all undirected networks if and only if, for every network instance and every non-negative edge-weight vector $\mathbf{w}$, every irreducible core is independent under $\mathbf{w}$.
\end{theorem}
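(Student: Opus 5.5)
The plan is to derive both directions of Theorem~\ref{thm:interaction} from the instance-level equivalence of Theorem~\ref{thm:equivalence_independence}, using Theorem~\ref{thm:reduction} to move between an instance and its irreducible cores. First I will fix what ``holds for all undirected networks'' means. By the cost-domain formulation, the conjecture is equivalent to $J_{\text{code}}(\mathcal{H}) = J_{\text{route}}(\mathcal{H})$ for every instance $(G,\mathcal{H})$ and every $\mathbf{w}\ge 0$: Eq.~\eqref{eq:conjecture_cost} for all $\mathbf{w}$ is the LP dual of the throughput statement. Both directions then reduce to cost equalities on particular instances.

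\textbf{Necessity ($\Rightarrow$).} Assume the conjecture holds universally. Take any instance and weight vector $\mathbf{w}$, and any irreducible core obtained from it by a sequence of dominance reductions. The core is itself an instance $(G',\mathcal{H}',\mathbf{w}')$ of the multiple-unicast problem on an undirected graph with non-negative weights. The universal conjecture therefore gives $J_{\text{code}}(\mathcal{H}')=J_{\text{route}}(\mathcal{H}')$ under $\mathbf{w}'$. The necessity half of Theorem~\ref{thm:equivalence_independence} then shows that $\mathcal{H}'$ is independent under $\mathbf{w}'$. This direction is immediate. The only care needed is to observe that the core's graph may be the enlarged $G'$ from Definition~\ref{def:dominance}, which is still a legitimate undirected network.

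\textbf{Sufficiency ($\Leftarrow$).} Fix an arbitrary instance $\mathcal{I}=(G,\mathcal{H},\mathbf{w})$ and apply reducible-pair merges via Theorem~\ref{thm:pairwise}. Each merge strictly decreases $|\mathcal{H}|$, since symmetric pairs are excluded from reducibility, so the process terminates in an irreducible core $\mathcal{I}^\ast$. I will check that dominance composes: if $\mathcal{I}_2$ dominates $\mathcal{I}_1$ and $\mathcal{I}_3$ dominates $\mathcal{I}_2$, then $\mathcal{I}_3$ dominates $\mathcal{I}_1$. Condition~1 is transitive by inclusion and monotonicity of the weights. Condition~2 chains equalities of distances. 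For condition~3, the metric domination relations chain through the triangle inequality, since equality in $d(s_1,s_2)+d(s_2,t_2)+d(t_2,t_1)=d(s_1,t_1)$ is preserved when nested paths are concatenated. Alternatively, I would avoid transitivity altogether by applying Theorem~\ref{thm:reduction} stepwise backwards along the chain.

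By hypothesis the core is independent, so the sufficiency half of Theorem~\ref{thm:equivalence_independence} gives $J_{\text{code}}=J_{\text{route}}$ on $\mathcal{I}^\ast$. Theorem~\ref{thm:reduction}, applied repeatedly, lifts the equality back to $\mathcal{I}$. Since $\mathcal{I}$ and $\mathbf{w}$ were arbitrary, Eq.~\eqref{eq:conjecture_cost} holds everywhere and the conjecture follows.

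\textbf{Main obstacle.} The delicate step is the bookkeeping in the sufficiency direction. I must confirm that each pairwise merge of Theorem~\ref{thm:pairwise} really produces an instance that dominates the previous one in the full sense of Definition~\ref{def:dominance}, including preservation of all session distances, possibly via an augmented graph $G'$. I must also confirm that the hypothesis ``every irreducible core is independent'' covers the core actually produced, including its modified weights $\mathbf{w}'$. I would handle this by quantifying the hypothesis over all instances, so that the produced core $(G',\mathcal{H}',\mathbf{w}')$ is itself an instance whose only irreducible core is itself. Termination and order-independence are already asserted in Definition~\ref{def:reducible}, so I would cite that remark rather than reprove it.
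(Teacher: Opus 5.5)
Your route is the paper's. Necessity is the same argument: universal cost equality on the core instance, then the forward half of Theorem~\ref{thm:equivalence_independence}. Your sufficiency chain (reduce to a core, apply the hypothesis and Theorem~\ref{thm:equivalence_independence} there, lift back with Theorem~\ref{thm:reduction} one step at a time) is the paper's induction on the number of terminals $m$ written out explicitly. It even makes the $m\le 4$ base case superfluous, since the hypothesis also covers small cores.

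The step that fails is termination. You assert that every merge strictly decreases $|\mathcal{H}|$ because symmetric pairs are excluded from reducibility. But a pair merged under Condition~(2) of Theorem~\ref{thm:pairwise} is replaced by the symmetric pair $\{(S,T),(T,S)\}$, as in the construction in the proof of Theorem~\ref{thm:cost_dominance_condition}. So two sessions become two and $|\mathcal{H}|$ does not move. With that potential, nothing rules out an endless sequence of such merges, in which case no core is ever reached.

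What excluding symmetric pairs actually buys is a drop in the number of \emph{terminals}. The three or four distinct endpoints of a non-reversed, non-parallel pair are traded for the two fresh nodes $S,T$, whereas merging a reversed pair would only produce another reversed pair on two nodes. That is why the paper inducts on $m$ rather than on $|\mathcal{H}|$. Citing Definition~\ref{def:reducible} does not close the gap: it says that any core obtained suffices, not that one is obtained.

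To repair the argument, use $m$ as the potential and break ties by $|\mathcal{H}|$. The tie-break handles a parallel pair $h_i=h_j$, which Condition~(1) merges into a single session on two new nodes, leaving $m$ unchanged. State explicitly that the drop in $m$ presumes the endpoints of the merged pair are not also endpoints of other sessions; otherwise $S,T$ are added while nothing is removed. The paper's induction relies on this same assumption without stating it.
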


\begin{proof}
    Let $m$ be the number of terminal nodes. We use induction on $m$, with the base cases $m \le 4$ known to hold~\cite{seymour1980four}.

    \noindent\textbf{1. Sufficiency ($\Leftarrow$):}
    Consider an instance with $m > 4$.
    \begin{itemize}
        \item \textbf{Case 1: A Reducible Pair exists.} We apply dominance reduction. Since the pair is reducible (non-symmetric), the reduction strictly decreases the number of terminals to $m' < m$. By the inductive hypothesis, the conjecture holds for the reduced instance, and thus for the original (Theorem \ref{thm:reduction}).
        \item \textbf{Case 2: No Reducible Pair exists.} 
        Fix an arbitrary edge-weight vector $\mathbf{w}$. The remaining session set is an irreducible core and is therefore independent under $\mathbf{w}$. By Theorem \ref{thm:equivalence_independence}, this independence guarantees $J_{code} = J_{route}$ under $\mathbf{w}$. Since $\mathbf{w}$ was arbitrary, the conjecture follows.
    \end{itemize}

    \noindent\textbf{2. Necessity ($\Rightarrow$):}
    Assume the conjecture holds universally. Suppose there exists an irreducible core that is not independent under some edge-weight vector $\mathbf{w}$. By Theorem \ref{thm:equivalence_independence}, we must have $J_{code} < J_{route}$ under this $\mathbf{w}$, which directly contradicts the universal validity of the conjecture. Thus, every irreducible core must be independent under every non-negative edge-weight vector.
\end{proof}
\subsection{Stage I: Reduction via Dominance}
\noindent \textbf{The Challenge of Reduction Order.} The concept of \emph{mergeable sessions}\cite{liu2026multiple} suggests that a session set can be recursively merged into a symmetric pair. However, the merging order is critical. In general graphs, a suboptimal merging sequence may lead to a ``dead end," necessitating an intractable search to find a valid reduction path.

To bypass this search, we identify a sufficient condition on cycles that guarantees order-independence. We show that satisfying \emph{Cycle Closure} and \emph{Non-Crossing} ensures that the set can be deterministically reduced.

\begin{theorem}[New constructive criteria]
\label{thm:cost_dominance_condition}
Given an instance $(G,\mathcal{H},\mathbf{w})$. Let $C$ be a cycle in $G$. Consider a subset of sessions $\mathcal{H}_C \subseteq \mathcal{H}$ whose terminals all lie on $C$. If the following two conditions are met:
\begin{enumerate}
    \item \textbf{Cycle Closure:} The shortest path between any two terminals in $\mathcal{H}_C$ lies entirely on $C$;
    \item \textbf{Non-Crossing:} The sessions in $\mathcal{H}_C$ are pairwise non-crossing on the cycle.
\end{enumerate}
Then, the entire set $\mathcal{H}_C$ is dominated by a symmetric pair of sessions.
\end{theorem}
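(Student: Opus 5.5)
Because of Cycle Closure, every distance between terminals of $\mathcal{H}_C$ equals the length of one of the two arcs of $C$ joining them. The plan is to work entirely with this cyclic metric: view $C$ as a circle of total length $L=\sum_{e\in C} w(e)$, where $d_w(x,y)=\min\{\ell(x\to y),\,L-\ell(x\to y)\}$ for terminals $x,y$.

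\textbf{Step 1: arcs of the sessions.} Each session $h_i=(s_i,t_i)$ uses a shortest arc $A_i\subseteq C$. By Non-Crossing, any two chords $\{s_i,t_i\}$ and $\{s_j,t_j\}$ do not interleave on $C$. So for any pair, either the arcs can be chosen disjoint, or one can be chosen nested inside the other. If two arcs overlap without nesting, their endpoints would interleave, which contradicts Non-Crossing (the degenerate case of shared endpoints is handled by choosing the arc orientation consistently).

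\textbf{Step 2: nested arcs give dominance directly.} If $A_j\subseteq A_i$ and the orientations agree, then $s_i,s_j,t_j,t_i$ appear in this order along a shortest path. Definition~\ref{def:dominance} then holds exactly: $d_w(s_i,s_j)+d_w(s_j,t_j)+d_w(t_j,t_i)=d_w(s_i,t_i)$. If the orientations disagree, the reversal $(t_j,s_j)$ is dominated instead, which is acceptable since a symmetric pair is allowed at the end. Applying this repeatedly removes every nested session. Only the outermost sessions remain, and their arcs are pairwise interior-disjoint, so they tile a portion of $C$ in cyclic order.

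\textbf{Step 3: collapse the remaining arcs into one symmetric pair.} Let the outermost arcs be $A_1,\dots,A_p$ in cyclic order around $C$. We will take a point $a$ and its antipode $b$ on $C$ with $\ell(a\to b)=\ell(b\to a)=L/2$, subdividing an edge if necessary, which is permitted since $G\subseteq G'$. Then $d_{w}(a,b)=L/2$ is attained along both arcs. The key claim is that we can choose $a$ so that each $A_i$ lies inside one of the two half-circles from $a$ to $b$. If so, each $h_i$ is dominated by $(a,b)$ or by $(b,a)$, and Condition~2 of Definition~\ref{def:dominance} holds because no weight changes and no shortcuts are added. To pick $a$, place it at an endpoint between consecutive arcs, or in a gap between them. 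Because each $A_i$ is a shortest arc, $|A_i|\le L/2$, so it fits in a half-circle. The obstacle is fitting all the arcs simultaneously: we must find a cut point $a$ whose antipode $b$ also avoids the interiors of all $A_i$. I would try a sweeping argument. Let $\phi(x)$ denote the antipode of $x$, and move $a$ continuously along $C$. The set of positions where $a$ avoids arc interiors, together with the set where $\phi(a)$ avoids them, must intersect, because the arcs in total have length at most $L$ and each has length at most $L/2$. I would run a pigeonhole or intermediate-value argument on the function ``arc containing $a$'' versus ``arc containing $\phi(a)$''.

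If the simultaneous choice fails, I would fall back on Theorem~\ref{thm:pairwise}. Consecutive outermost arcs $A_i,A_{i+1}$ satisfy condition~1 or~2 of that theorem, since for disjoint non-crossing chords on a cycle metric the relevant sum inequality follows from the triangle inequality along $C$. Merging them yields a new session spanning $A_i\cup A_{i+1}$ (or its complementary arc), which is again non-crossing with the rest. This lets us induct on $p$ until one session and its reverse remain, and that final pair is the symmetric pair required by the statement.
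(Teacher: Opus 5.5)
Your Steps 1 and 2 are fine: nested shortest arcs do give dominance, and dominance is transitive. The key claim of Step 3 is false, however, and the failure is not an artifact of your method.

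\textbf{Counterexample.} Take $G=C$ to be the $6$-cycle $q_0q_1\cdots q_5q_0$ with $w(q_0q_1)=w(q_2q_3)=w(q_4q_5)=1$ and $w(q_1q_2)=w(q_3q_4)=w(q_5q_0)=\varepsilon<1/3$. Use the sessions $(q_0,q_1)$, $(q_2,q_3)$, $(q_4,q_5)$. Cycle Closure and Non-Crossing hold trivially. Yet the points avoiding all arc interiors lie in the three $\varepsilon$-gaps, and their antipodes fall inside arcs, so your sweep has nothing to find. Total arc length at most $L$, with each arc at most $L/2$, is not enough.

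\textbf{No symmetric pair can dominate these sessions.} Suppose $\{(S,T),(T,S)\}$ dominates them in some $(G',\mathbf{w}')$, and put $g(x)=d_{w'}(S,x)-d_{w'}(T,x)$. If $(S,T)$ dominates $(s,t)$, the triangle inequality forces $d_{w'}(S,t)=d_{w'}(S,s)+d_{w'}(s,t)$ and $d_{w'}(s,T)=d_{w'}(s,t)+d_{w'}(t,T)$. Hence $g(t)-g(s)=2d_{w'}(s,t)=2d_w(s,t)=2$. Symmetrically, $g(s)-g(t)=2$ if $(T,S)$ dominates it. Across each gap, $|g(x)-g(y)|\le 2d_{w'}(x,y)\le 2\varepsilon$, because $\mathbf{w}'|_G\le\mathbf{w}$. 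Telescoping $g$ once around the cycle gives $\pm2\pm2\pm2\in[-6\varepsilon,6\varepsilon]$, which is impossible. So the statement as written is false, and no completion of your argument exists.

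\textbf{Your fallback fails too.} The arc $A_i$, the gap, and $A_{i+1}$ together can be longer than $L/2$ (length $2+\varepsilon$ versus $L/2=(3+3\varepsilon)/2$ in the example). Then no session along that arc of $C$ dominates $h_i$ and $h_{i+1}$. The merge supplied by Theorem~\ref{thm:pairwise} instead uses auxiliary terminals off $C$: one joined to the two outer endpoints and one to the two inner ones. That destroys both the arc picture and Cycle Closure. By the obstruction above, the third session can then never be absorbed into a symmetric pair.

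\textbf{The paper's proof has the same hole.} It asserts without justification that the pairwise condition \emph{is preserved} for $\{h_{new},h_k\}$, and that iterated merging ends in a single symmetric pair.

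\textbf{What your approach does prove.} It establishes the statement under the extra hypothesis that some antipodal pair $a,b$ on $C$ avoids the interiors of all session arcs. Even then, do not subdivide $C$. Your claim $d_w(a,b)=L/2$ is unjustified, since Cycle Closure constrains only distances between terminals, and off-cycle chords may join non-terminal points of $C$. Instead, add new nodes $S,T$, joining each terminal $x$ to $S$ and $T$ by edges of weights $\ell(a,x)$ and $\ell(b,x)$, where $\ell$ is the cycle metric. Cycle Closure gives $d_w=\ell$ on terminals, so no distance in $G$ shrinks, $d_{w'}(S,T)=L/2$, and each session is dominated by $(S,T)$ or $(T,S)$.
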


\begin{proof}We iteratively merge session pairs satisfying Theorem \ref{thm:pairwise}. Consider the terminals of $\mathcal{H}_C$ arranged clockwise on $C$. By the pairwise non-crossing property, scanning this sequence yields a consecutive subsequence of terminals belonging to two distinct sessions $h_i$ and $h_j$. Let the terminals of $h_i$ be $\{u_i, v_i\}$ and those of $h_j$ be $\{u_j, v_j\}$. These terminals must appear in one of two basic topological orders: \textbf{sequential} ($u_i, v_i, u_j, v_j$) or \textbf{nested} ($u_i, u_j, v_j, v_i$).

We verify that in both cases, the shortest paths connecting the corresponding terminals satisfy the dominance inequality. Due to the \textit{Cycle Closure} condition, the shortest path between any two terminals lies entirely on the cycle $C$. For the \textbf{sequential case}, we consider the cross-connection paths $u_i \leadsto u_j$ and $v_i \leadsto v_j$. For the \textbf{nested case}, we consider the paths $u_i \leadsto v_j$ and $u_j \leadsto v_i$.
Due to the cyclic ordering, these chosen paths inevitably traverse the intermediate terminals ({\em e.g.}, in the sequential case, the path $u_i \leadsto u_j$ on the cycle must pass through $v_i$ or $v_j$). Consequently, the sum of the lengths of these ``crossed" paths covers the segments of the original paths $u_i \leadsto v_i$ and $u_j \leadsto v_j$ plus the gap between them. This geometric property implies that the sum of the original distances is less than or equal to the sum of the crossed distances, thereby satisfying Condition (1) or (2) of Theorem \ref{thm:pairwise}, depending on the specific assignment of source ($s$) and sink ($t$) roles to $u$ and $v$.

In both scenarios, the pair $\{h_i, h_j\}$ satisfies one of the dominance inequalities. We now explicitly construct a dominating instance. Let $h_i = (s_i, t_i)$ and $h_j = (s_j, t_j)$.

Assume $d_w(s_i, t_i) + d_w(s_j, t_j) \le d_w(s_i, t_j) + d_w(s_j, t_i)$. We extend $G$ with auxiliary nodes $S, T$ and edges $(S, s_k), (t_k, T)$ with weights $x_k, y_k$ for $k \in \{i, j\}$. To ensure $(S, T)$ dominates $\{h_i, h_j\}$, we require the length of the path passing through matching terminals ($S \to s_k \to t_k \to T$) to be a constant $L$, while the path through crossed terminals must have length at least $L$. This yields the system:
$
x_k + y_k + d_w(s_k, t_k) = L, x_i + y_j + d_w(s_i, t_j) \ge L \quad k \in \{i, j\}
.$
By treating $x_k, y_k, L$ as variables, it can be shown that non-negative solutions exist if and only if the distance inequality of Condition (1) holds. Furthermore, by selecting a sufficiently large $L$, we ensure that shortest paths between other terminals in $G$ remain unaffected. Thus, $\{h_i, h_j\}$ is replaced by $(S, T)$.

Assume the inequality takes the form $d_w(s_i, t_i) + d_w(s_j, t_j) \le d_w(s_i, s_j) + d_w(t_j, t_i)$. The construction is analogous, but we connect $S$ to $s_i$ and $t_j$, and $T$ to $s_j$ and $t_i$. The resulting dominating set contains the symmetric pair $\{(S, T), (T, S)\}$. The solvability condition for the corresponding linear system is exactly Condition (2).

We employ a sequential merging strategy. After replacing the initial pair $\{h_i, h_j\}$ with the dominating session $h_{new}=(S, T)$ (and replacing the corresponding terminal subsequence $\{u, v, u, v\}$ in the ordering), we turn to the session $h_k$ whose terminals are consecutive to or immediately enclose the segment occupied by $\{h_i, h_j\}$. Fig.~\ref{fig:nested_reduction} illustrates this critical step: the dominating session $h_{new}$ (red) is constructed to replace the adjacent pair $\{h_i, h_j\}$.

The new pair $\{h_{new}, h_k\}$ is valid for the next merger step for two reasons. First, since $h_{new}$ inherits the position previously held by $h_i$ and $h_j$, it maintains the non-crossing relationship with $h_k$. Second, regarding the dominance inequality, the auxiliary weights for $h_{new}$ are explicitly constructed to preserve the shortest-path distances. Since the original session $h_k$ satisfied the pairwise dominance condition with both $h_i$ and $h_j$ (as established in our initial analysis), the condition is preserved for $\{h_{new}, h_k\}$.

Consequently, we can merge $h_{new}$ and $h_k$ into a new session. We repeat this process, continuously identifying adjacent pairs in the contiguous subsequence, and sequentially absorbing sessions one by one. This iteration continues until the entire set $\mathcal{H}_C$ is reduced to a single symmetric pair that dominates the original session set.
\end{proof}
\begin{figure}[t]
    \centering
    \includegraphics[width=0.48\linewidth]{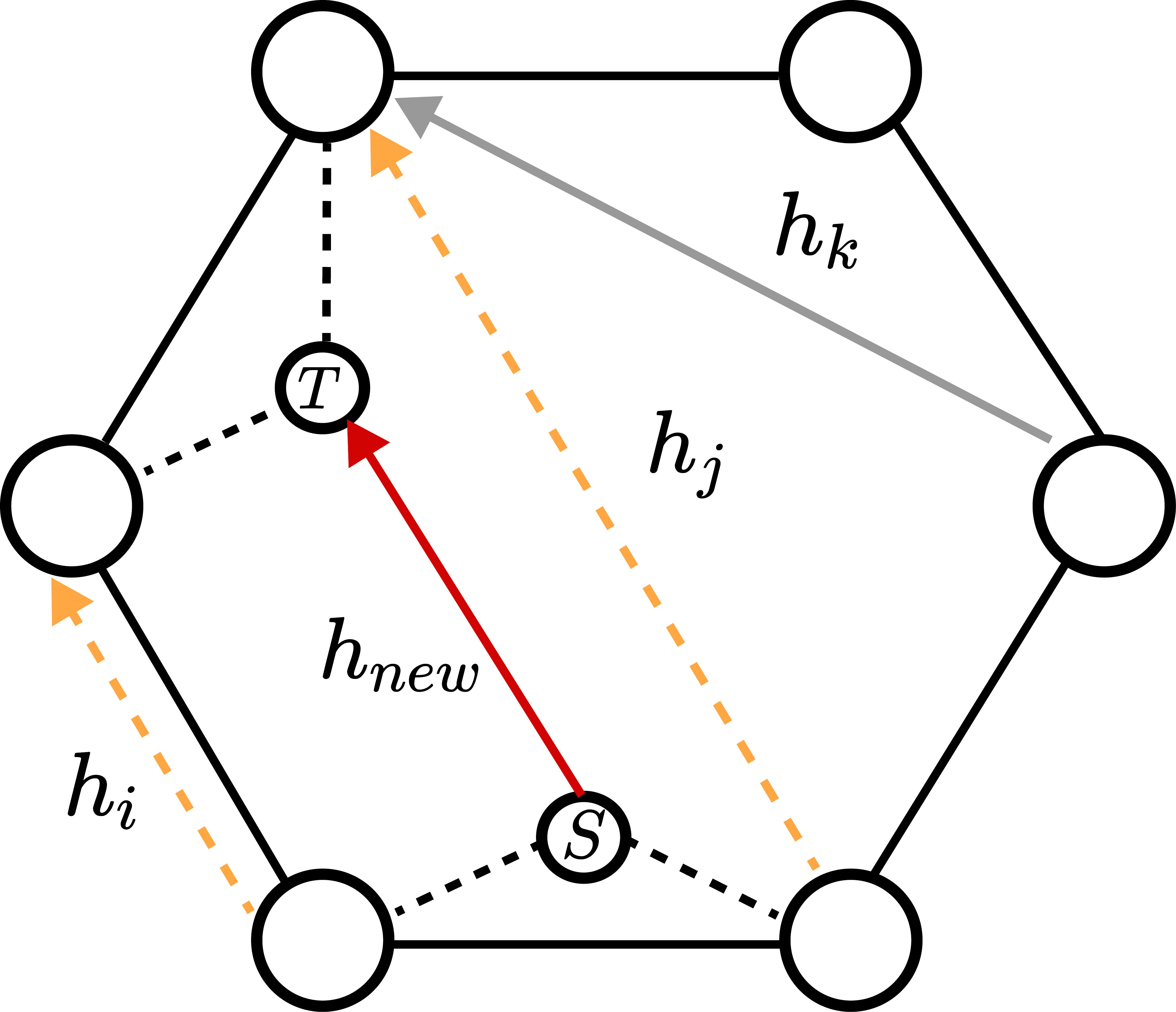} 
    \caption{Geometric construction for the reduction (Theorem \ref{thm:cost_dominance_condition}). The adjacent pair $\{h_i, h_j\}$ (dashed orange) is dominated by the session $h_{new}=(S, T)$ (solid red) via auxiliary nodes. }
    \label{fig:nested_reduction}
\end{figure}
\begin{theorem}
\label{thm:throughput_conjecture_condition}
Let $G=(V, E)$ be an undirected graph with edge capacities $c$, and let $\mathcal{H}$ be a set of sessions. Suppose $\mathcal{H}$ can be partitioned into two subsets $\mathcal{H}_1$ and $\mathcal{H}_2$. For each subset $\mathcal{H}_i$ ($i \in \{1, 2\}$), let $V_{\mathcal{H}_i}$ denote its terminal nodes. Consider a transformed graph $G'_i$ obtained by recursively applying the following edge contraction operations specific to $\mathcal{H}_i$:
\begin{enumerate}
    \item Contract an edge $(u, v)$ if $u, v \notin V_{\mathcal{H}_i}$ (merging two non-terminals);
    \item Contract an edge $(u, v)$ if $u \in V_{\mathcal{H}_i}, v \notin V_{\mathcal{H}_i}$ (merging a terminal with a non-terminal).
\end{enumerate}
If, for each $i$, the resulting graph $G'_i$ forms a simple cycle $C_i$ such that the terminals of $\mathcal{H}_i$ map to distinct nodes on $C_i$ in a pairwise non-crossing order, then the multiple-unicast conjecture holds for the set of sessions $\mathcal{H}$.
\end{theorem}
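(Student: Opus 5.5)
The plan is to work in the cost domain and prove Eq.~\eqref{eq:conjecture_cost} for every non-negative weight vector $\mathbf{w}$. By LP duality this is equivalent to the throughput statement for capacities $c$. The argument has three steps: split a coding solution for $\mathcal{H}$ into its parts for $\mathcal{H}_1$ and $\mathcal{H}_2$; lower-bound each part through the contracted cycle $G'_i$ and Theorem~\ref{thm:cost_dominance_condition}; and recombine.

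\textbf{Step 1 (splitting).} Fix $\mathbf{w}$ and a minimum-cost coding solution for $\mathcal{H}$. I would first argue that $\mathcal{H}_1$ and $\mathcal{H}_2$ are independent in the sense of Definition~\ref{def:independence}. Then, exactly as in the sufficiency part of Theorem~\ref{thm:equivalence_independence}, the edge flow decomposes as $f_e = f^{(1)}_e + f^{(2)}_e$, where $f^{(i)}$ is a feasible coding flow for $\mathcal{H}_i$ alone. It then suffices to show $\sum_e w(e) f^{(i)}_e \ge \sum_{h\in\mathcal{H}_i} r_h\, d_w(s_h,t_h)$ for each $i$, since adding the two inequalities gives Eq.~\eqref{eq:conjecture_cost}. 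This is the step I expect to be the main obstacle. The hypothesis only describes the topology as seen by each $\mathcal{H}_i$ separately, so independence must be extracted from the fact that after contraction each subset interacts with the network only through a single cycle. My first attempt would use the contraction structure: every edge of $G$ is either contracted in $G'_i$ or survives as a cycle edge of $C_i$. I would then use the non-crossing order to show that mixing information across the two subsets on an edge cannot reduce the cost charged to either subset. If this fails, the fallback is to state the result as conditional on $\mathcal{H}_1,\mathcal{H}_2$ being independent.

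\textbf{Step 2 (per-subset bound via contraction).} Contracting an edge means setting its weight to $0$, i.e.\ passing to $\mathbf{w}'\le\mathbf{w}$. Every coding flow for $\mathcal{H}_i$ in $(G,\mathbf{w})$ induces a coding flow in $G'_i$ of no larger cost, so $J_{code}(\mathcal{H}_i;G,\mathbf{w}) \ge J_{code}(\mathcal{H}_i;G'_i)$. On $G'_i=C_i$ the terminals are distinct and non-crossing, and Cycle Closure holds automatically because the graph is the cycle itself. Theorem~\ref{thm:cost_dominance_condition} therefore applies: $\mathcal{H}_i$ is dominated by a symmetric pair. For a single session or a symmetric pair, $J_{code}=J_{route}$ (a two-terminal instance, covered by the base case~\cite{seymour1980four}). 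Theorem~\ref{thm:reduction} then gives $J_{code}=J_{route}$ for $\mathcal{H}_i$ on $C_i$.

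\textbf{Step 3 (lifting distances back).} To recombine, I need $J_{route}(\mathcal{H}_i;C_i)$ to equal $\sum_{h\in\mathcal{H}_i} r_h\, d_w(s_h,t_h)$ computed in $G$. Contraction only shortens distances, so I would check, using the choice of which edges are contracted, that terminal-to-terminal distances are preserved. If they are not exactly preserved, I would replace contraction by the dominance embedding of Definition~\ref{def:dominance}, which keeps $G\subseteq G'$, $\mathbf{w}'\le\mathbf{w}$ and session distances unchanged, and apply Theorem~\ref{thm:reduction} directly to $(G,\mathcal{H}_i,\mathbf{w})$. Summing over $i=1,2$ and using $J_{route}(\mathcal{H})\le J_{route}(\mathcal{H}_1)+J_{route}(\mathcal{H}_2)$ then gives $J_{code}(\mathcal{H})=J_{route}(\mathcal{H})$ for every $\mathbf{w}$, which is the conjecture for $\mathcal{H}$.
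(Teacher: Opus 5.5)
There is a genuine gap, and it is in your Step~1. The hypotheses constrain each $\mathcal{H}_i$ only in isolation: its own contracted graph $G'_i$ and its own non-crossing order. Nothing is said about where the terminals of $\mathcal{H}_1$ sit relative to those of $\mathcal{H}_2$. Sessions from different subsets may cross and share edges freely, so the non-crossing order cannot rule out cross-subset mixing.

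More fundamentally, independence of $\mathcal{H}_1$ and $\mathcal{H}_2$ is essentially the conclusion itself. Once each $\mathcal{H}_i$ satisfies $J_{code}=J_{route}$, independence gives the result by Theorem~\ref{thm:decoupling}. Conversely, if $J_{code}(\mathcal{H})=J_{route}(\mathcal{H})$, an optimal routing witnesses independence, as in the necessity part of Theorem~\ref{thm:equivalence_independence}. So Step~1 assumes what is to be proved, and your fallback changes the theorem.

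The idea you are missing is that no splitting is needed. The paper fixes $\mathbf{w}$ and applies Theorem~\ref{thm:cost_dominance_condition} to each $\mathcal{H}_i$ separately, replacing it by a symmetric pair. All of $\mathcal{H}$ is then dominated by an instance with at most four terminals. The cost inequality holds there by the four-terminal result~\cite{seymour1980four}, and Theorem~\ref{thm:reduction} transfers it back. The coupling between $\mathcal{H}_1$ and $\mathcal{H}_2$ is absorbed by this four-terminal base case rather than decoupled. If you adopt this route, the two symmetric-pair constructions must live in one common extended graph with every session distance preserved. The large-$L$ choice in the proof of Theorem~\ref{thm:cost_dominance_condition} is meant to ensure this.

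Your Steps~2--3 also fail as written, because the inequality goes the wrong way. Contracting an edge sets its weight to $0$, so terminal distances in $G'_i$ are generally strictly smaller than $d_w$. For instance, a path $a$--$x$--$b$ with $x$ a non-terminal becomes a single edge of weight $w(x,b)$. The bound $J_{code}(\mathcal{H}_i;G'_i)$ is therefore a bound against shrunken distances. This violates condition~2 of Definition~\ref{def:dominance}, and you do not recover $\sum_h r_h\, d_w(s_h,t_h)$.

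What you need is Cycle Closure and Non-Crossing in $(G,\mathbf{w})$ itself. The paper obtains this by restoring $C_i$ to a cycle $C'_i$ in $G$ built from $\mathbf{w}$-shortest paths. It then argues that a terminal-to-terminal shortest path leaving $C'_i$ would survive the contractions as an extra branch, so $G'_i$ would contain a theta-like subgraph rather than being a simple cycle. Your fallback points toward this route, but it leaves unproved the one claim that links the contraction hypothesis to the metric $\mathbf{w}$.
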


\begin{proof}
We establish the validity of the multiple-unicast conjecture by invoking the duality principle established by Li and Li ~\cite{li2004network}. The conjecture holds for the instance $(G, \mathcal{H})$ with arbitrary edge capacities $c$ if and only if it holds for the instance $(G, \mathcal{H})$ under any non-negative edge weight vector $w$. Thus, it suffices to prove that the cost inequality (Eq. (\ref{eq:conjecture_cost})) holds under an arbitrarily chosen, fixed weight vector $w \in \mathbb{R}_{\ge0}^{|E|}$. Let us proceed by evaluating the network under this specific $w$.

For each subset $\mathcal{H}_i$, we construct a corresponding cycle structure in the original graph $G$. Let $C_i$ be the cycle in the contracted graph $G'_i$. We restore a cycle $C'_i$ in $G$ by replacing each edge $(u, v)$ on $C_i$ with the shortest path $P_{uv}$ between these nodes in $G$ under weights $w$.

We first verify the \textbf{Cycle Closure} condition. We proceed by contradiction. Suppose that for some pair of terminals, the shortest path in $G$ deviates from the restored cycle $C'_i$. This deviation would imply the existence of a path structure in $G$ that connects terminals through nodes or edges not accounted for by the cycle geometry. Upon applying the edge contraction operations defined in the theorem, such a structure would not collapse into a simple cycle $C_i$; instead, it would result in a graph $G'_i$ containing additional branches or multiple paths between nodes ({\em e.g.}, a theta graph). This contradicts the hypothesis that $G'_i$ is a simple cycle. Therefore, the shortest path between any two terminals in $\mathcal{H}_i$ must lie entirely on $C'_i$.

We next verify the \textbf{Non-Crossing} condition. The restoration process expands edges into paths but preserves the relative ordering of the nodes. Since the terminals of $\mathcal{H}_i$ are arranged in a pairwise non-crossing order on the contracted cycle $C_i$ by hypothesis, this non-crossing sequence is maintained on the restored cycle $C'_i$.

By Theorem \ref{thm:cost_dominance_condition}, each subset $\mathcal{H}_i$ is dominated by a symmetric pair under the chosen $w$. Thus, under $w$, the original instance is dominated by an instance with at most four terminals. Since the cost inequality (Eq. (\ref{eq:conjecture_cost})) holds for all 4-terminal instances under any valid weight vector\cite{seymour1980four}, it holds for $\mathcal{H}$ under $w$. By the arbitrariness of $w$, Eq. (1) holds globally for all non-negative weight vectors. Through the duality\cite{li2004network}, the multiple-unicast conjecture is verified for $\mathcal{H}$.
\end{proof}

\subsection{Stage II: Decomposition via Independence}

While dominance reduces the complexity of specific sessions, \emph{independence} allows us to break down the global problem structure. We propose the \textbf{Session Decoupling Theorem} to leverage this property for problem decomposition.

\begin{theorem}[Session Decoupling]
\label{thm:decoupling}
 Let $\mathcal{H}$ be a session set. If $\mathcal{H}$ can be partitioned into disjoint subsets $\mathcal{H}_1$ and $\mathcal{H}_2$ that are independent under a specific weight vector $\mathbf{w}$, then the cost equality $J_{code} = J_{route}$ holds for $\mathcal{H}$ under $\mathbf{w}$ if and only if it holds for both $\mathcal{H}_1$ and $\mathcal{H}_2$ under $\mathbf{w}$.
\end{theorem}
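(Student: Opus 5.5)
The plan is to mirror the sufficiency argument of Theorem~\ref{thm:equivalence_independence}, but to split the flow at the level of the two blocks $\mathcal{H}_1,\mathcal{H}_2$ instead of individual sessions. Fix $\mathbf{w}$ and let $\mathcal{C}^\star$ be the minimum-cost coding solution for $\mathcal{H}$ that witnesses the independence of $\mathcal{H}_1$ and $\mathcal{H}_2$ in Definition~\ref{def:independence}.

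First I will prove the block-additive cost bound
\[
J_{code}(\mathcal{H}) \ge J_{code}(\mathcal{H}_1)+J_{code}(\mathcal{H}_2).
\]
By Eq.~\eqref{eq:independence}, no symbol of $\mathcal{C}^\star$ on any edge carries information about both a message of $\mathcal{H}_1$ and a message of $\mathcal{H}_2$. So every symbol can be assigned to a class: those depending only on $\{M_i : h_i\in\mathcal{H}_1\}$, and those depending only on the $\mathcal{H}_2$ messages. Symbols that carry information about no message I will discard, since they do not lower the cost. This gives $f_e = f_e^{(1)} + f_e^{(2)}$ on every edge.

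Next I will argue that restricting $\mathcal{C}^\star$ to class-$\ell$ symbols gives a valid coding solution for $\mathcal{H}_\ell$. The other block's messages can be treated as fixed side values, since they are independent of the class-$\ell$ messages. Hence $\sum_e w(e) f_e^{(\ell)} \ge J_{code}(\mathcal{H}_\ell)$, and summing over $\ell$ gives the bound.

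The reverse inequality $J_{code}(\mathcal{H}) \le J_{code}(\mathcal{H}_1)+J_{code}(\mathcal{H}_2)$ is immediate: run optimal solutions for the two blocks side by side. In the cost domain there are no capacity constraints, so superposition is feasible. Hence $J_{code}(\mathcal{H}) = J_{code}(\mathcal{H}_1)+J_{code}(\mathcal{H}_2)$. Routing is trivially additive, $J_{route}(\mathcal{H}) = J_{route}(\mathcal{H}_1)+J_{route}(\mathcal{H}_2)$, because multicommodity min-cost routing under $\mathbf{w}$ decomposes into independent shortest-path problems with value $\sum_i r_i d_w(s_i,t_i)$.

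Both directions then follow from these two identities and the universal inequality $J_{code}(\mathcal{H}_\ell)\le J_{route}(\mathcal{H}_\ell)$.
\begin{itemize}
\item[($\Leftarrow$)] If equality holds for both blocks, summing gives $J_{code}(\mathcal{H})=J_{route}(\mathcal{H})$.
\item[($\Rightarrow$)] If $J_{code}(\mathcal{H})=J_{route}(\mathcal{H})$, then $\sum_\ell \bigl(J_{route}(\mathcal{H}_\ell)-J_{code}(\mathcal{H}_\ell)\bigr)=0$. Each term is non-negative, so each term vanishes.
\end{itemize}

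The main obstacle is making the block splitting rigorous in the first step. Knowing that a symbol's conditional mutual information with each $\mathcal{H}_1$ message (or each $\mathcal{H}_2$ message) vanishes does not immediately give a functional decomposition of the code. An intermediate node could in principle combine two single-block symbols into a mixed one, and a symbol could have zero conditional information with every individual $M_i$ yet be jointly dependent on several messages. I will handle this by processing the nodes in topological order of the acyclic time-expanded code and showing inductively that every edge symbol is a function of one block's messages. The second step is to show that the rate needed to carry the class-$\ell$ symbols is at least their entropy contribution, so that $f_e^{(\ell)}$ is well defined and the cost splits additively. I expect the induction to need the per-pair condition applied to all pairs across the two blocks simultaneously; this is exactly why Definition~\ref{def:independence} requires a common witnessing solution.
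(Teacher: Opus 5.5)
Your proposal is correct and follows the paper's proof. For ($\Leftarrow$), you split the witnessing minimum-cost code into $f=f_1+f_2$ to get $J_{code}(\mathcal{H})\ge J_{code}(\mathcal{H}_1)+J_{code}(\mathcal{H}_2)$ and then use additivity of routing cost; your ($\Rightarrow$) really only needs the superposition bound, which the paper phrases by contraposition (coding on the violating block plus routing on the other).

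The block-splitting step you flag as the main obstacle is easier than you expect for deterministic codes. There $I(X;M_i\mid\mathbf{M}_{\setminus\{i\}})=H(X\mid\mathbf{M}_{\setminus\{i\}})$, which vanishes exactly when $X$ does not depend on $M_i$. So Eq.~\eqref{eq:independence} directly makes each symbol a function of one block's messages, with no topological induction needed, and your ``fix the other block's messages'' argument then gives a valid code for each $\mathcal{H}_\ell$.
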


\begin{proof}
    By Definition \ref{def:independence}, let $f$ be the flow of a minimum-cost coding solution witnessing the independence of $\mathcal{H}_1$ and $\mathcal{H}_2$. Eq.~\eqref{eq:independence} implies $f$ decomposes into orthogonal flows $f = f_1 + f_2$, where $f_k$ supports $\mathcal{H}_k$. Substituting this into the cost objective: $J_{\text{code}}(\mathcal{H}) = \sum_{e} w(e)f_1(e) + \sum_{e} w(e)f_2(e)$. Since $f_k$ is a feasible solution for $\mathcal{H}_k$, the term $\sum_{e} w(e)f_k(e)$ is lower-bounded by the optimal cost, {\em i.e.}, $\sum_{e} w(e)f_k(e) \ge J_{\text{code}}(\mathcal{H}_k)$. Thus, $J_{\text{code}}(\mathcal{H}) \ge J_{\text{code}}(\mathcal{H}_1) + J_{\text{code}}(\mathcal{H}_2)$. Given the additivity of routing costs ($J_{\text{route}}(\mathcal{H}) = \sum J_{\text{route}}(\mathcal{H}_k)$), the validity of the conjecture for subsets implies $J_{\text{code}}(\mathcal{H}) \ge J_{\text{route}}(\mathcal{H})$. Since $J_{\text{code}}(\mathcal{H}) \le
J_{\text{route}}(\mathcal{H})$ always holds, equality follows.
The converse follows immediately by contraposition: if the cost
equality fails for either subset, combining a minimum-cost coding
solution for that subset with an optimal routing solution for the
other subset yields
$J_{\text{code}}(\mathcal{H})<J_{\text{route}}(\mathcal{H})$.
\end{proof}

To apply this theorem practically, we identify topological conditions that guarantee independence. The core mechanism is \emph{Metric Decomposition}~\cite{8101552}: we decompose the network $(G, \mathbf{w})$ into two auxiliary networks $(G, \mathbf{w}_1)$ and $(G, \mathbf{w}_2)$ such that $\mathbf{w} = \mathbf{w}_1 + \mathbf{w}_2$. If we ensure that shortest path distances satisfy $d_{\mathbf{w}}(s_i, t_i) = d_{\mathbf{w}_1}(s_i, t_i) + d_{\mathbf{w}_2}(s_i, t_i)$ for all sessions, the linearity of coding cost implies that a network coding advantage in $G$ exists only if it exists in $G_1$ or $G_2$.
\begin{theorem}[Separation by High-Cost Cut]
\label{thm:high_cost_cut}
Let a cut $(V_U, V_V)$ partition the session set $\mathcal{H}$ into $\mathcal{H}_U$ (in $V_U$) and $\mathcal{H}_V$ (in $V_V$). Let $E_{cut}$ be the set of edges crossing the cut.
If there exist two reference nodes $u \in V_U$ and $v \in V_V$ (with no direct edge $(u, v)$) such that for every cut edge $e=(x,y) \in E_{cut}$ (with $x \in V_U, y \in V_V$), the edge weight satisfies $\frac{1}{2} w(x, y) \ge \max \{ d_w(u, x), d_w(v, y) \}$, then the session sets $\mathcal{H}_U$ and $\mathcal{H}_V$ are independent.
\end{theorem}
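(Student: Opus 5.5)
We will prove Theorem~\ref{thm:high_cost_cut} by the Metric Decomposition mechanism described just before the statement. The goal is to show that, under the cut hypothesis, no minimum-cost coding solution gains anything from symbols mixing messages of $\mathcal{H}_U$ and $\mathcal{H}_V$. We can then exhibit a minimum-cost solution witnessing Definition~\ref{def:independence} for the pair $(\mathcal{H}_U,\mathcal{H}_V)$.

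\textbf{Step 1 (structure of the metric).} We first record what the hypothesis buys. Every cut edge $(x,y)$ has $w(x,y)\ge 2\max\{d_w(u,x),d_w(v,y)\}$. Hence any walk that crosses the cut through $(x,y)$ pays at least as much as a detour through the reference nodes, $d_w(x,u)+d_w(u,\cdot)$ or $d_w(y,v)+d_w(v,\cdot)$. In particular, a terminal pair within $V_U$ never benefits from using a cut edge: a shortest path leaving $V_U$ and returning must use two cut edges $(x,y),(x',y')$. Its length is then at least $w(x,y)+w(x',y')\ge 2d_w(u,x)+2d_w(u,x')\ge d_w(x,x')$, so it can be rerouted inside via $u$. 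The symmetric statement holds for $V_V$. Here we need some care that $d_w(u,x)$ is itself realized inside $V_U$. We would argue this by induction on path length using the same inequality: an inner path that crosses out and back is no shorter than its rerouting.

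\textbf{Step 2 (metric decomposition).} We split $\mathbf{w}=\mathbf{w}_1+\mathbf{w}_2$. Set $\mathbf{w}_1=\mathbf{w}$ on the edges inside $V_U$ and $\tfrac12 w$ on cut edges, with zero elsewhere; define $\mathbf{w}_2$ symmetrically. By Step 1, $d_{\mathbf{w}}(s,t)=d_{\mathbf{w}_1}(s,t)+d_{\mathbf{w}_2}(s,t)$ for every session of $\mathcal{H}_U$ (with $d_{\mathbf{w}_2}=0$ there), and likewise for $\mathcal{H}_V$. The halving on cut edges is where the factor $\tfrac12$ in the hypothesis enters: under $\mathbf{w}_1$, crossing into $V_V$ costs at least $d_w(u,x)$, which keeps Step 1 valid for each piece. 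Then by linearity of cost, $\sum_e w(e)f(e)=\sum_e w_1(e)f(e)+\sum_e w_2(e)f(e)$. In the network $\mathbf{w}_1$ only $\mathcal{H}_U$ has positive demand distance, and in $\mathbf{w}_2$ only $\mathcal{H}_V$ does.

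\textbf{Step 3 (extracting independence).} Take any minimum-cost coding solution $\mathcal{C}$. Replace every symbol on a cut edge by routing: by Step 1, $\mathcal{H}_U$ traffic can be kept inside $V_U$ at no extra $\mathbf{w}$-cost, and $\mathcal{H}_V$ traffic inside $V_V$. Messages of $\mathcal{H}_U$ are never needed in $V_V$, so restricting $\mathcal{C}$ to $V_U$ and conditioning out $\mathcal{H}_V$'s messages yields a solution in which each symbol carries only $\mathcal{H}_U$ information; the same holds on the $V_V$ side. The cost does not increase, so the result is still minimum-cost, and Eq.~\eqref{eq:independence} holds for every cross pair. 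That is exactly the independence required.

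\textbf{Main obstacle.} The hardest step is Step 3: justifying that a coding solution can be ``projected'' onto one side without increasing cost. Inside $V_U$, symbols may mix $\mathcal{H}_V$ messages that had entered across the cut. We would first try to argue that such injected $\mathcal{H}_V$ information must re-cross the cut to reach its sinks. The cut-edge lower bound $\tfrac12 w(x,y)\ge d_w(v,y)$ then shows that this round trip costs at least a direct routing in $V_V$ via $v$, so deleting it is weakly cost-improving. If this fails, we would fall back on bounding total cost by the two metric pieces and invoking Theorem~\ref{thm:decoupling}-style additivity.
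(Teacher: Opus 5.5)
Your Steps 1--2 follow the paper's proof. The paper also splits $\mathbf{w}=\mathbf{w}_1+\mathbf{w}_2$ and uses $\tfrac12 w(x,y)\ge d_w(u,x)$ to show that $\mathcal{H}_U$-distances survive in $(G,\mathbf{w}_1)$. It then discards the zero-distance $\mathcal{H}_V$ sessions by hard-coding. It zeroes the cut edges at $u$ so that $V_V$ merges into $u$; your uniform halving works equally well.

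\textbf{The gap is Step 3.} Step 1 is a statement about paths, i.e.\ about routing, and it does not apply to a coded symbol on a cut edge. Such a symbol may be, e.g., $M_i\oplus M_j$ with $h_i\in\mathcal{H}_U$ and $h_j\in\mathcal{H}_V$, useful to sinks on both sides. Then there is no separate ``$\mathcal{H}_U$ traffic'' to reroute. Your round-trip accounting in the ``Main obstacle'' paragraph charges each crossing to a single session, but in $\mathcal{C}$ a crossing is paid once and used by both sides. That shared use is exactly how a coding gain would arise.

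The projection step also fails as written. Hard-coding the $\mathcal{H}_V$ messages does give a valid solution for $\mathcal{H}_U$, but only on all of $G$. Restricting it to $V_U$ can cut $\mathcal{H}_U$ sinks off from symbols that reached them through $V_V$. Moreover, doing this for both sides gives two solutions that may each occupy every edge of $\mathcal{C}$ at full rate, so their superposition can cost up to $2\,\mathrm{cost}(\mathcal{C})$. So ``the cost does not increase'' is exactly the unproved claim. Your fallback through Theorem~\ref{thm:decoupling} is circular, because that theorem assumes the independence you are trying to prove.

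\textbf{The missing idea:} do not transform a given optimum. Instead, lower-bound every solution and take the witness to be a superposition.

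For any coding solution $\mathcal{C}$ for $\mathcal{H}$, linearity gives $\mathrm{cost}_{\mathbf{w}}(\mathcal{C})=\mathrm{cost}_{\mathbf{w}_1}(\mathcal{C})+\mathrm{cost}_{\mathbf{w}_2}(\mathcal{C})$. Fixing the other side's messages to constants (your ``conditioning out'', applied to all of $\mathcal{C}$ rather than a restriction) gives $\mathrm{cost}_{\mathbf{w}_1}(\mathcal{C})\ge J^{\mathbf{w}_1}_{\text{code}}(\mathcal{H}_U)$ and $\mathrm{cost}_{\mathbf{w}_2}(\mathcal{C})\ge J^{\mathbf{w}_2}_{\text{code}}(\mathcal{H}_V)$.

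What remains is a coding-level fact, not a distance-level one: $J^{\mathbf{w}_1}_{\text{code}}(\mathcal{H}_U)\ge J^{\mathbf{w}}_{\text{code}}(\mathcal{H}_U)$. To prove it, let $u$ simulate every node of $V_V$, which is free under $\mathbf{w}_1$. Relay each symbol of a cut edge $(x,y)$ along a $\mathbf{w}$-shortest $u$--$x$ path. This costs $d_w(u,x)\le\tfrac12 w(x,y)=w_1(x,y)$ per unit, and this is where the factor $\tfrac12$ is actually used. The argument for $\mathcal{H}_V$ is symmetric, with $v$ in place of $u$.

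Hence $J_{\text{code}}(\mathcal{H})\ge J_{\text{code}}(\mathcal{H}_U)+J_{\text{code}}(\mathcal{H}_V)$. Superposing an optimal solution for $\mathcal{H}_U$ with one for $\mathcal{H}_V$ attains this bound, so it is a minimum-cost solution for $\mathcal{H}$. In it every symbol depends on one side's messages only, which gives Eq.~\eqref{eq:independence} for all cross pairs. This route needs neither the distance preservation of Step 1 nor the care about $d_w(u,x)$ being realized inside $V_U$, since the relay paths may run anywhere in $G$.
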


\begin{proof}
We prove independence by explicitly constructing the metric decomposition $\mathbf{w} = \mathbf{w}_1 + \mathbf{w}_2$.
To construct $\mathbf{w}_1$, we aim to ``collapse" the $V_V$ side to the single node $u$. For internal edges, we preserve weights in $V_U$ and set weights in $V_V$ to zero. For the cut edges, if an edge connects to $u$, we set $w_1(e) = 0$; if it connects to $v$, we maintain the full weight $w_1(e) = w(e)$; for all other cut edges, we set $w_1(e) = \frac{1}{2}w(e)$. (The condition that no edge $(u,v)$ exists ensures this assignment is well-defined).
Symmetrically, to construct $\mathbf{w}_2$, we collapse $V_U$ to $v$. We set internal weights in $V_U$ to zero and preserve weights in $V_V$. For cut edges, if an edge connects to $v$, set $w_2(e) = 0$; if it connects to $u$, set $w_2(e) = w(e)$; otherwise, set $w_2(e) = \frac{1}{2}w(e)$. Summing these constructions yields $\mathbf{w}_1 + \mathbf{w}_2 = \mathbf{w}$ for all edges.

In the network $(G, \mathbf{w}_1)$, since edges in $V_V$ and incident to $u$ are zero-cost, all nodes in $V_V$ effectively shrink to $u$. For $\mathcal{H}_U$, we must ensure shortest paths remain within $V_U$ without taking ``shortcuts" through the cut. Any path attempting to utilize the cut must exit $V_U$ at some node $x$ to enter the zero-cost region $V_V$. If $x \neq u$, the crossing cost is at least $\frac{1}{2}w(x,y)$. The high-cost condition $\frac{1}{2}w(x,y) \ge d_w(u,x)$ implies that routing directly to $u$ (the gateway to the zero-cost region) is always cheaper or equal to crossing at $x$. This confirms that no optimal flow for $\mathcal{H}_U$ would traverse these high-cost edges, thus preserving distances ($d_{\mathbf{w}_1} = d_{\mathbf{w}}$).
Simultaneously, for every session in $\mathcal{H}_V$, the distance becomes zero. These trivial sessions can be eliminated. This is because removing them does not alter the routing cost, and if a hypothetical coding scheme utilized these zero-cost sessions to achieve a gain for $\mathcal{H}_U$, one could derive a superior scheme for the pruned instance via hard-coding.
Thus, the analysis of the conjecture on $(G, \mathbf{w}_1)$ effectively reduces to $\mathcal{H}_U$ alone. By symmetry, $(G, \mathbf{w}_2)$ reduces to $\mathcal{H}_V$.
This decomposition implies that in the constructed auxiliary instances, the High-Cost Cut carries zero flow. Consequently, the sessions $\mathcal{H}_U$ and $\mathcal{H}_V$ are decoupled.
\end{proof}
\begin{remark}The high-cost cut in Theorem \ref{thm:high_cost_cut} guarantees independence only under a specific $w$. Since Theorem 3.2 requires independence for all $w \in \mathbb{R}_{\ge0}^{|E|}$, this serves merely as a local metric certificate. Establishing independence for arbitrary weights requires topological features, such as those introduced in the following theorem.
\end{remark}
\begin{theorem}[Separation by Cut-Vertex]
\label{thm:cut_vertex}
Let $v$ be a cut-vertex in $G$ whose removal partitions the vertices into disjoint sets $V_1$ (containing terminals of $\mathcal{H}_1$) and $V_2$ (containing terminals of $\mathcal{H}_2$). Then, the session sets $\mathcal{H}_1$ and $\mathcal{H}_2$ are independent.
\end{theorem}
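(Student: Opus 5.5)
We will follow the template of Theorem~\ref{thm:high_cost_cut} and build an explicit metric decomposition $\mathbf{w}=\mathbf{w}_1+\mathbf{w}_2$. Here the cut-vertex $v$ plays the role of both reference nodes at once, and no cut edges need to be split. Let $E_1$ be the edges with both endpoints in $V_1\cup\{v\}$ and $E_2$ the edges with both endpoints in $V_2\cup\{v\}$. Because $v$ is a cut-vertex, no edge joins $V_1$ to $V_2$, so $\{E_1,E_2\}$ partitions $E$. We set $\mathbf{w}_1=\mathbf{w}$ on $E_1$ and $0$ on $E_2$, and $\mathbf{w}_2=\mathbf{w}$ on $E_2$ and $0$ on $E_1$. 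Then $\mathbf{w}_1+\mathbf{w}_2=\mathbf{w}$ holds edgewise. Since the argument never uses any property of $\mathbf{w}$ beyond nonnegativity, it holds for every $\mathbf{w}\in\mathbb{R}_{\ge0}^{|E|}$. This is exactly the topological, weight-free certificate the preceding Remark asks for.

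The second step is distance preservation. Take a session in $\mathcal{H}_1$. Every path between two vertices of $V_1$ that enters $V_2$ must pass through $v$ twice. Its $E_2$-portion is therefore a closed walk at $v$, and deleting it gives a shorter-or-equal path inside $E_1$. Hence $d_{\mathbf{w}}(s_i,t_i)=d_{\mathbf{w}_1}(s_i,t_i)$. In $(G,\mathbf{w}_2)$ the whole side $V_1\cup\{v\}$ collapses to $v$ at zero cost, so $d_{\mathbf{w}_2}(s_i,t_i)=0$. The additive identity $d_{\mathbf{w}}=d_{\mathbf{w}_1}+d_{\mathbf{w}_2}$ follows, and the symmetric statement holds for $\mathcal{H}_2$. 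As in the proof of Theorem~\ref{thm:high_cost_cut}, the sessions with zero distance in each auxiliary network are trivial and may be pruned. So $(G,\mathbf{w}_1)$ reduces to $\mathcal{H}_1$ on the block $E_1$, and $(G,\mathbf{w}_2)$ reduces to $\mathcal{H}_2$ on $E_2$.

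The third step produces the common minimum-cost coding solution required by Definition~\ref{def:independence}. Start from any minimum-cost coding solution. Every symbol crossing between the two sides passes through $v$, so $v$ is a single-node bottleneck. We will argue that $v$ can decode, or simply cancel, any mixing of $\mathcal{H}_1$ and $\mathcal{H}_2$ information. The resulting scheme uses $E_1$-edges only for functions of $\mathcal{M}_1$ and $E_2$-edges only for functions of $\mathcal{M}_2$, at no greater cost. Concretely, consider the symbols sent into $E_2$ that depend on $\mathcal{M}_1$. Their only purpose can be to return to $V_1$ through $v$, and $v$ already holds them, so the reflection is redundant. Removing such symbols and substituting $v$'s own copies yields a solution of no greater cost in which Eq.~\eqref{eq:independence} holds for all $h_i\in\mathcal{H}_1$, $h_j\in\mathcal{H}_2$. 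This gives independence.

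The main obstacle is making this third step rigorous. The metric decomposition yields cost additivity cleanly, but Definition~\ref{def:independence} is a per-symbol information-theoretic condition. One must show that mixing at or around $v$ cannot lower cost; for example, $v$ might forward $M_a\oplus M_b$ into both sides. My first attempt is a cut-set argument at $v$: the information any symbol on $E_2$ carries about $\mathcal{M}_1$ is already available at $v$. That lets us replace each coded symbol by a projection onto its $\mathcal{M}_2$-relevant component, computed at $v$ with side information. If the projection is not well defined, I would fall back on the cost-domain route. There, the decomposition combined with linearity of coding cost shows $J_{code}(\mathcal{H})=J_{code}(\mathcal{H}_1)+J_{code}(\mathcal{H}_2)$. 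The separate optimal solutions on the two blocks can then be concatenated into a minimum-cost solution that is independent by construction.
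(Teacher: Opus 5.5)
Your first two steps are the paper's proof. The paper likewise splits $\mathbf{w}$ at $v$ into a part vanishing on the $V_2$ block and a part vanishing on the $V_1$ block. It argues that detours through $v$ into the other block only add non-negative weight, so the retained sessions keep their distances. It then prunes the zero-distance sessions and notes that nothing depends on $\mathbf{w}$. At that point the paper simply declares independence, so your third step addresses a point the paper leaves implicit. Definition~\ref{def:independence} asks for one minimum-cost coding solution satisfying the per-symbol condition, and shortest-path preservation does not produce one. You diagnose this correctly, but neither of your routes closes the gap as written.

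\textbf{Route (a)} breaks on your own example. A symbol on $E_2$ such as $M_a\oplus M_b$ (with $h_a\in\mathcal{H}_1$, $h_b\in\mathcal{H}_2$) cannot be deleted without hurting $\mathcal{H}_2$. A ``projection onto its $\mathcal{M}_2$-relevant component'' also has no meaning for a nonlinear code.

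\textbf{Route (b)} needs $J_{\text{code}}(\mathcal{H};\mathbf{w}_1)\ge J_{\text{code}}(\mathcal{H}_1\text{ on }E_1;\mathbf{w})$. You support this only by $d_{\mathbf{w}_1}=d_{\mathbf{w}}$, which controls routing cost, not coding cost; a priori the free edges of $E_2$ could help code for $\mathcal{H}_1$.

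\textbf{The missing idea}, which fixes both routes, is hard-coding plus simulation at $v$. Take any minimum-cost solution $\mathcal{C}$. On $E_1$, transmit what $\mathcal{C}$ transmits with the $\mathcal{H}_2$ messages frozen at a constant. On $E_2$, transmit what it transmits with the $\mathcal{H}_1$ messages frozen.
\begin{itemize}
\item \emph{Implementable:} once the $\mathcal{H}_2$ messages are frozen, $V_2$ has no other inputs. So everything $V_2$ sends back to $v$ is a deterministic function of what $v$ sent into $E_2$, and $v$ can simulate that block internally; symmetrically for $V_1$.
\item \emph{Still minimum-cost:} every edge keeps its rate, so the cost is unchanged.
\item \emph{Correct:} every sink still decodes, since $\mathcal{C}$ works for every message realization.
\item \emph{Independent:} each symbol on $E_k$ is a function of $\mathcal{H}_k$'s messages alone, which gives Eq.~\eqref{eq:independence} for every cross pair.
\end{itemize}
In route (b), the same simulation is what justifies contracting the zero-weight block onto $v$ before pruning.
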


\begin{proof}
This follows as a special case of Theorem \ref{thm:high_cost_cut}. We construct the metric decomposition $\mathbf{w} = \mathbf{w}_1 + \mathbf{w}_2$.
For $\mathbf{w}_1$, we set weights in $V_2$ to zero, collapsing the entire component to node $v$. Consequently, distances for $\mathcal{H}_2$ vanish, allowing their elimination.
For $\mathcal{H}_1$, since $v$ is the unique vertex connecting $V_1$ and $V_2$, any path leaving $V_1$ must traverse $v$. As $v$ acts as the boundary to the zero-cost region $V_2$, extending a path into $V_2$ and returning to $V_1$ merely adds non-negative weight without offering a shortcut. Thus, shortest paths for $\mathcal{H}_1$ are preserved.
Symmetry holds for $\mathbf{w}_2$. Since this independence condition relies solely on topology and holds for any weight vector $\mathbf{w}$, by the duality between cost and throughput domains, the sessions are independent for any edge capacity vector $\mathbf{c}$.
\end{proof}
\section{Conclusion}

We present the Session Interaction Framework to dissect the conjecture via a two-stage hierarchical reduction. Locally, Cycle Closure and Non-Crossing conditions enable deterministic reduction, bypassing complex searches. Globally, Metric Decomposition identifies cuts that decouple independent parts. Together, these tools transform complex graphs into verifiable structures. Future work involves integrating this toolbox with information inequalities to validate broader graph classes.




\bibliographystyle{IEEEtran}
\bibliography{ref}

\end{document}